\DeclareMathOperator{\ud}{d}
\newcommand{\N}{\mathbb{N}}
\newcommand{\Z}{\mathbb{Z}}
\newcommand{\R}{\mathbb{R}}
\DeclarePairedDelimiter\abs{\lvert}{\rvert}
\DeclarePairedDelimiter\norm{\lVert}{\rVert}
\DeclarePairedDelimiter\comm{[}{]}
\DeclarePairedDelimiter\ceil{\lceil}{\rceil}
\DeclareMathOperator{\e}{e}
\DeclareMathOperator{\ad}{ad}
\def\input@path{{figures/}}
\newtheorem{definition}{Definition}
\newtheorem{theorem}[definition]{Theorem}
\newtheorem{proposition}[definition]{Proposition}
\newtheorem{lemma}[definition]{Lemma}
\begin{document}

\title{Trotter error with commutator scaling for the Fermi-Hubbard model}

\author{Ansgar Schubert}
\email{ansgar.schubert@tum.de}
\affiliation{Technical University of Munich, CIT, Department of Computer Science, Boltzmannstra{\ss}e 3, 85748 Garching, Germany}
\author{Christian B.~Mendl}
\email{christian.mendl@tum.de}
\affiliation{Technical University of Munich, CIT, Department of Computer Science, Boltzmannstra{\ss}e 3, 85748 Garching, Germany}
\affiliation{Technical University of Munich, Institute for Advanced Study, Lichtenbergstra{\ss}e 2a, 85748 Garching, Germany}

\date{September 26, 2023}

\begin{abstract}
We derive higher-order error bounds with small prefactors for a general Trotter product formula, generalizing a result of Childs et al.~[Phys.~Rev.~X 11, 011020 (2021)].
We then apply these bounds to the real-time quantum time evolution operator governed by the Fermi-Hubbard Hamiltonian on one-dimensional and two-dimensional square and triangular lattices.
The main technical contribution of our work is a symbolic evaluation of nested commutators between hopping and interaction terms for a given lattice geometry.
The calculations result in explicit expressions for the error bounds in terms of the time step and Hamiltonian coefficients.
Comparison with the actual Trotter error (evaluated on a small system) indicates that the bounds still overestimate the error.
\end{abstract}

\maketitle

\section{Introduction}

This work is concerned with quantum simulation, i.e., approximating the time evolution operator $\e^{-i t H}$ of a quantum system governed by a time independent Hamiltonian $H$.
Splitting methods (also called product formulas) are a versatile and natural approach for this purpose, and can in principle be directly mapped to a quantum computer based on their circuit representation \cite{Wecker2015, Jiang2018, Childs2018, Smith2019, Heyl2019, Clinton2021, Bassman2021, Zhao2023}.
There is a rich mathematical theory and many studies on how to bound the simulation error when using product formulas \cite{Babbush2015, Jiang2018, Hadfield2018, Babbush2018, Childs2019, Tranter2019, Childs2021, Su2021, Campbell2021, McArdle2022, Layden2022, Chen2022, Myers2023}.
In particular, Childs et al.\ derived improved error bounds in a recent paper \cite{Childs2021}.
Their work applies to Hamiltonians $H = \sum_{\gamma=1}^{\Gamma} H_\gamma$ consisting of $\Gamma$ summands, where the time evolution governed by each individual summand can be evaluated exactly.
Using the same notation as Ref.~\cite{Childs2021}, a general (Trotter) product formula can be written as
\begin{equation}
\label{eq:general_product_formula}
\mathscr{S}_p(t) = \prod_{v=1}^{\Upsilon} \prod_{\gamma=1}^{\Gamma} \e^{-i t a_{v,\gamma} H_{\pi_v(\gamma)}}
\end{equation}
with $\Upsilon$ the number of stages, real coefficients $a_{v,\gamma}$ and $\pi_v(\gamma)$ the ordering of summands in a stage.
$p$ denotes the \emph{order} of the product formula, i.e.,
\begin{equation}
\label{eq:p_order_condition}
\mathscr{S}_p(t) = \e^{-i t H} + \mathcal{O}(t^{p+1}).
\end{equation}
Childs et al.\ derive the following bound:
\begin{equation} \label{eq:error-bound-alpha-comm-results}
\norm*{\mathscr{S}_p(t) -  \e^{-i t H}} = \mathcal{O}(t^{p+1}\tilde{\alpha}_{\text{comm}})
\end{equation}
where
\begin{equation} \label{eq:def-alpha-comm-results}
\tilde{\alpha}_{\text{comm}} = \sum_{\gamma_1, \dots, \gamma_{p+1} = 1}^{\Gamma} \norm*{ \comm{H_{\gamma_1}, \dots \comm{H_{\gamma_p}, H_{\gamma_{p+1}}}\dots } },
\end{equation}
$\norm{\cdot}$ denotes the spectral norm and $\comm{\cdot, \cdot}$ the commutator.
The authors also prove even tighter bounds for special cases, denoted error bounds with small prefactors.

In the present work, we generalize these results of Childs et al.\ by deriving general error bounds with small prefactors (see Theorem~\ref{thm:higher_order_bounds_prefactors} below), and evaluate commutators and bounds specifically for the Fermi-Hubbard model, representing a widely-studied and fundamental model class.
Compared to the conceptually similar studies~\cite{Su2021, Campbell2021}, we consider various lattice geometries.

Denoting the spin by $\sigma \in \{\uparrow, \downarrow \}$ and the underlying lattice by $\Lambda$, the Fermi-Hubbard Hamiltonian is defined as
\begin{equation}
\label{eq:def_H_FH}
H_{\text{FH}} = %
v \sum_{\langle i,j \rangle, \sigma} \left( a_{i\sigma}^\dagger a_{j\sigma}^{} + a_{j\sigma}^\dagger a_{i\sigma}^{} \right) + %
u \sum_{i \in \Lambda} n_{i\uparrow}^{} n_{i\downarrow}^{}
\end{equation}
where $i$ and $j$ refer to neighboring lattice sites in the first sum, $v \in \R$ is the kinetic hopping coefficient, and $u > 0$ the on-site interaction strength.
$a_{i\sigma}^\dagger$, $a_{i\sigma}^{}$ and $n_{i\sigma}^{}$ are the fermionic creation, annihilation and number operators, respectively, acting on site $i \in \Lambda$ and spin $\sigma \in \{ \uparrow, \downarrow \}$.

\section{General higher-order error bounds with small prefactors}

For notational simplicity, we express Eq.~\eqref{eq:general_product_formula} as
\begin{equation}
\label{eq:product_formula_sequential}
\mathscr{S}_p(t) = \e^{-i t A_K} \cdots \e^{-i t A_1} = \prod_{k=1}^K \e^{-i t A_k},
\end{equation}
where we have already merged compatible consecutive exponentials: $\e^{-i t a_1 H_{\gamma}} \e^{-i t a_2 H_{\gamma}} = \e^{-i t (a_1 + a_2) H_{\gamma}}$.

The multiplication order is relevant for non-commuting operators, and we fix the notation and convention:
\begin{equation}
\prod_{i=k}^n A_i = A_n A_{n-1} \cdots A_k
\end{equation}
and
\begin{equation}
\coprod_{i=n}^k A_i = A_k \cdots A_{n-1} A_n.
\end{equation}

The following theorem generalizes \cite[Appendix~M]{Childs2021}, with $\ad_A B = \comm{A, B}$ denoting the \emph{adjoint action}, and $\ad_A^q$ its $q$-fold application.

\begin{theorem}[Higher-order error bounds with small prefactors]
\label{thm:higher_order_bounds_prefactors}
Let $\mathscr{S}_p$ be a product formula of order $p$ in the representation \eqref{eq:product_formula_sequential}, and let $s \in \{ 1, \dots, K \}$.
Then
\begin{equation}
\label{eq:product_formula_error_bound}
\begin{split}
&\norm*{\mathscr{S}_p(t) - \e^{-i t H}} \\
&\le \frac{t^{p+1}}{(p+1)!} \Bigg( \sum_{j=2}^s \sum_{\substack{q_j + \dots + q_s = p \\ q_j \neq 0}} \binom{p}{q_j, \dots, q_s} \norm*{\ad_{A_s}^{q_s} \cdots \ad_{A_j}^{q_j} B_j} \\
&\: + \sum_{j=s+1}^K \sum_{\substack{q_{s+1} + \dots \\ + q_j = p \\ q_j \neq 0}} \binom{p}{q_{s+1}, \dots, q_j} \norm*{\ad_{A_{s+1}}^{q_{s+1}} \cdots \ad_{A_j}^{q_j} B_j} \Bigg)
\end{split}
\end{equation}
with
\begin{equation}
B_j = \sum_{\ell=1}^{j-1} A_{\ell}, \quad j = 2, \dots, K.
\end{equation}
\end{theorem}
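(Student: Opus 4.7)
The plan is to adapt and generalize the argument from \cite[Appendix~M]{Childs2021} by allowing a ``pivot'' index $s$ that controls the direction in which commutators are expanded; the case $s = K$ recovers the original statement. The order-$p$ condition \eqref{eq:p_order_condition} forces every Taylor coefficient of $\mathscr{S}_p(t) - \e^{-i t H}$ of order at most $t^p$ to vanish, so only the tail of order $t^{p+1}$ must be bounded.

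The first step is a telescoping decomposition of the error. Defining the ``left'' and ``right'' partial products $V_j(t) := \e^{-i t A_K} \cdots \e^{-i t A_{j+1}}$ and $W_j(t) := \e^{-i t A_{j-1}} \cdots \e^{-i t A_1}$, together with the exact partial evolution $\e^{-i t B_{j+1}}$, and using $\mathscr{S}_p(t) = V_j(t) \e^{-i t A_j} W_j(t)$, one finds the telescoping identity
\begin{equation*}
\mathscr{S}_p(t) - \e^{-i t H} = \sum_{j=2}^K V_j(t) \bigl( \e^{-i t A_j}\e^{-i t B_j} - \e^{-i t B_{j+1}} \bigr),
\end{equation*}
together with a dual form in which the one-step BCH defect sits on the left of $W_j(t)$. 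The bracketed factor measures the failure of $A_j$ and $B_j$ to commute and vanishes to order $t^2$.

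The pivot $s$ then chooses which form to use for each $j$ and dictates through which neighbouring $A_k$ the commutator factor is pushed via $\e^{-i \tau A_k} X \e^{i \tau A_k} = \sum_{q \ge 0} \tfrac{(-i \tau)^q}{q!} \ad_{A_k}^q X$. For $j \le s$, the left form is used and the defect is conjugated through $A_{j+1}, \dots, A_s$ (inside $V_j$), producing $\ad_{A_s}^{q_s} \cdots \ad_{A_j}^{q_j} B_j$; for $j > s$, the right form is used and conjugation is carried out through $A_{j-1}, \dots, A_{s+1}$ (inside $W_j$), producing $\ad_{A_{s+1}}^{q_{s+1}} \cdots \ad_{A_j}^{q_j} B_j$. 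Collecting like powers of $t^p$ in the resulting multi-series yields the multinomial coefficients $\binom{p}{q_j, \dots, q_s}$ and $\binom{p}{q_{s+1}, \dots, q_j}$, while the constraint $q_j \ne 0$ reflects that the step-$j$ defect is generated by $A_j$ and so at least one instance of $\ad_{A_j}$ must be present in every surviving term.

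Invoking the order-$p$ condition eliminates all contributions of total $t$-degree strictly less than $p+1$; what survives fits into a Taylor integral remainder of the form $\int_0^t \tfrac{(t-\tau)^p}{p!} R(\tau) \, \ud \tau$, producing the prefactor $t^{p+1}/(p+1)!$ once norms are taken inside the integral and the unitary outer exponentials are discarded. I expect the main obstacle to be the combinatorial bookkeeping: verifying that the surviving terms reassemble exactly into the multinomials of \eqref{eq:product_formula_error_bound}, and that the left and right telescoping schemes dovetail consistently at the pivot. Once that identification is made, the final bound follows from the triangle inequality.
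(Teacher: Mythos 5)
Your high-level strategy --- per-step defects, conjugation toward a pivot $s$, Taylor expansion with the order condition cancelling everything below degree $p+1$, then the triangle inequality on the remainder --- is the same as the paper's, but your starting decomposition is the wrong one for this particular bound, and the gap is not mere bookkeeping. The paper does not telescope the operator difference directly; it differentiates, obtaining $\frac{\ud}{\ud t}\mathscr{S}_p(t)-(-iH)\mathscr{S}_p(t)=\sum_{j=2}^K\big(\prod_{k>j}\e^{-itA_k}\big)\comm{\e^{-itA_j},-iB_j}\big(\prod_{k<j}\e^{-itA_k}\big)$, and recovers the error via variation of parameters, $\mathscr{S}_p(t)-\e^{-itH}=\int_0^t\e^{-i(t-\tau)H}\,\mathscr{S}_{\text{left}}(\tau)\mathcal{T}(\tau)\mathscr{S}_{\text{right}}(\tau)\,\ud\tau$. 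The crucial structural payoff is that the $j$-th defect is exactly $\comm{\e^{-itA_j},-iB_j}=\e^{-itA_j}\int_0^t\e^{i\tau\ad_{A_j}}\ad_{-iA_j}(-iB_j)\,\ud\tau$: every term of its expansion, at every order, has the form $\ad_{A_j}^{q_j}B_j$ with $q_j\ge1$, and the subsequent conjugations are by the \emph{individual} exponentials $\e^{-itA_{j+1}},\dots,\e^{-itA_s}$ (resp.\ $\e^{itA_{j-1}},\dots,\e^{itA_{s+1}}$), so the full remainder --- not just its leading term --- consists solely of operators $\ad_{A_s}^{q_s}\cdots\ad_{A_j}^{q_j}B_j$. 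That is precisely what makes \eqref{eq:product_formula_error_bound} come out.

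Your telescoping defect $\e^{-itA_j}\e^{-itB_j}-\e^{-itB_{j+1}}$ does not have this property. Its BCH/Taylor expansion already at order $t^3$ contains $\comm{B_j,\comm{A_j,B_j}}$ alongside $\comm{A_j,\comm{A_j,B_j}}$, and since $B_j=\sum_{\ell<j}A_\ell$ the former expands into nested commutators with an outer adjoint $\ad_{A_\ell}$, $\ell<j$ --- index combinations that simply do not occur in \eqref{eq:product_formula_error_bound}. The order condition only cancels the polynomial part of the \emph{sum over $j$} up to degree $p$; it does nothing to remove these structures from the degree-$(p+1)$-and-higher remainders of the individual terms, so after the triangle inequality your route yields a differently structured (and generally weaker) bound unless a further cancellation argument is supplied. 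Two secondary issues: (i) you cannot use the ``left'' telescoping for $j\le s$ and its dual for $j>s$ inside one sum, since each identity by itself already accounts for the whole difference; a genuinely two-sided identity anchored at $s$ would have to be constructed, whereas the derivative approach gets the two-sided placement for free because each defect naturally sits between $\prod_{k>j}$ and $\prod_{k<j}$. (ii) The prefactor $\frac{t^{p+1}}{(p+1)!}$ together with the multinomials (which arise with a shifted index $q_j+1$ from the inner $\tau$-integration) comes from an explicit double-integral remainder computation that your sketch does not reach; asserting a generic $\int_0^t\frac{(t-\tau)^p}{p!}R(\tau)\,\ud\tau$ form does not by itself produce these coefficients.
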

Appendix~\ref{sec:proof_higher_order_bounds} contains a proof of the theorem.

As demonstration, we reproduce the results in \cite[Appendix~L]{Childs2021} using Eq.~\eqref{eq:product_formula_error_bound} applied to the Strang (second-order Suzuki) splitting rule
\begin{equation}
\mathscr{S}_2(t) = \e^{-i t H_1/2} \e^{-i t H_2} \e^{-i t H_1/2},
\end{equation}
with $H_1$ and $H_2$ Hermitian matrices and the overall Hamiltonian $H = H_1 + H_2$.
In the notation of Eq.~\eqref{eq:product_formula_sequential}, we have thus $K = 3$, $A_1 = \frac{1}{2} H_1$, $A_2 = H_2$, $A_3 = \frac{1}{2} H_1$ and $B_2 = \frac{1}{2} H_1$, $B_3 = \frac{1}{2} H_1 + H_2$.
We set $s = 2$.
Then the multinomial coefficients in \eqref{eq:product_formula_error_bound} evaluate to $1$, and
\begin{equation}
\label{eq:strang_bound_2h_demo}
\begin{split}
&\norm*{\mathscr{S}_2(t) - \e^{-i t H}} \\
&\le \frac{t^3}{3!} \left( \norm*{\ad_{A_2}^2 B_2} + \norm*{\ad_{A_3}^2 B_3} \right) \\
&= \frac{t^3}{6} \left( \frac{1}{2} \norm*{\comm{H_2, \comm{H_2, H_1}}} + \frac{1}{4} \norm*{\comm{H_1, \comm{H_1, H_2}}} \right) \\
&= \frac{t^3}{12} \norm*{\comm{H_2, \comm{H_2, H_1}}} + \frac{t^3}{24} \norm*{\comm{H_1, \comm{H_1, H_2}}},
\end{split}
\end{equation}
in agreement with \cite[Eq.~(L5)]{Childs2021}.

Evaluating Eq.~\eqref{eq:product_formula_error_bound} for the fourth-order Suzuki formula and two Hamiltonian terms likewise reproduces the coefficients in \cite[Eq.~(M13)]{Childs2021}, as expected.

We have empirically found that a centered $s = \ceil{\frac{K}{2}}$ generally leads to the tightest bounds (smallest coefficients).
For example, setting $s = 1$ instead of $2$ in the above Strang splitting demonstration results in the prefactor $\frac{t^3}{4}$ instead of $\frac{t^3}{12}$ in Eq.~\eqref{eq:strang_bound_2h_demo}.
Nevertheless, there are instances where some coefficients are slightly larger and others slightly smaller, e.g., for the fourth-order Suzuki formula and two Hamiltonian terms ($K = 11$), choosing $s = 6$ versus $s = 7$.
In these instances, the best choice for $s$ then depends on the actual norms of the nested commutators.
We will set $s = \ceil{\frac{K}{2}}$ throughout for the numerical calculations in Sect.~\ref{sec:results}.

It turns out that the commutator bound in Theorem~\ref{thm:higher_order_bounds_prefactors} can be further slightly improved for the special case of the Strang splitting method and more than two Hamiltonian terms, see \cite[Proposition~10]{Childs2021}.
To be self-contained, we cite the result here:
\begin{proposition}[Tight error bound for the second-order Suzuki formula, \cite{Childs2021}]
\label{prop:tight_bound_strang}
Let $H = \sum_{\gamma=1}^{\Gamma} H_\gamma$ be a Hamiltonian consisting of $\Gamma$ summands and $t \ge 0$.
Let $\mathscr{S}_2(t) = \coprod_{\gamma=\Gamma}^1 \e^{-i t H_\gamma/2} \prod_{\gamma=1}^{\Gamma} \e^{-i t H_\gamma/2}$ be the second-order Suzuki formula.
Then, the additive Trotter error can be bounded as
\begin{equation}
\label{eq:tight_bound_strang}
\begin{split}
&\norm*{\mathscr{S}_2(t) - \e^{-i t H}} \\
&\le \frac{t^3}{12} \sum_{\gamma_1=1}^{\Gamma} \norm*{\comm*{\sum_{\gamma_3=\gamma_1+1}^{\Gamma} H_{\gamma_3}, \comm*{\sum_{\gamma_2=\gamma_1+1}^{\Gamma} H_{\gamma_2}, H_{\gamma_1}}}} \\
&\quad + \frac{t^3}{24} \sum_{\gamma_1=1}^{\Gamma} \norm*{\comm*{H_{\gamma_1}, \comm*{\sum_{\gamma_2=\gamma_1+1}^{\Gamma} H_{\gamma_2}, H_{\gamma_1}}}}.
\end{split}
\end{equation}
\end{proposition}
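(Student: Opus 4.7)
The plan is to prove the proposition by induction on the number of Hamiltonian summands $\Gamma$, with the base case $\Gamma = 2$ supplied directly by Eq.~\eqref{eq:strang_bound_2h_demo}: for $\Gamma = 2$ only $\gamma_1 = 1$ contributes to either sum in Eq.~\eqref{eq:tight_bound_strang} and the inner partial sums collapse to $H_2$, so Eq.~\eqref{eq:tight_bound_strang} reduces exactly to the two-term Strang bound already established.

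For the inductive step, I would exploit the palindromic structure of the Strang formula by peeling off the outermost half-step. Writing $G_{\gamma_1} = \sum_{\gamma = \gamma_1 + 1}^{\Gamma} H_\gamma$, the factorization
\[
\mathscr{S}_2(t;\, H_1,\dots,H_\Gamma) = \e^{-itH_1/2}\, \mathscr{S}_2(t;\, H_2,\dots,H_\Gamma)\, \e^{-itH_1/2}
\]
holds by inspection. Inserting and subtracting the intermediate approximant $\e^{-itH_1/2}\, \e^{-itG_1}\, \e^{-itH_1/2}$, applying the triangle inequality, and using the unitary invariance $\|UXV\| = \|X\|$ of the spectral norm to strip the outer $\e^{-itH_1/2}$ conjugations splits the total error into
\[
\bigl\|\mathscr{S}_2(t;\, H_2,\dots,H_\Gamma) - \e^{-itG_1}\bigr\| + \bigl\|\e^{-itH_1/2}\, \e^{-itG_1}\, \e^{-itH_1/2} - \e^{-itH}\bigr\|.
\]

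I would bound the first term by the inductive hypothesis applied to the $\Gamma - 1$ Hamiltonians $H_2,\dots,H_\Gamma$; under the reindexing $\gamma_1 = i + 1$ the inner partial sums of the hypothesis coincide with the desired $G_{\gamma_1}$, producing precisely the $\gamma_1 = 2,\dots,\Gamma$ contributions in Eq.~\eqref{eq:tight_bound_strang}. The second term is the two-term Strang error for the pair $\{H_1, G_1\}$, which the base case bounds by $\tfrac{t^3}{12}\,\|\comm{G_1, \comm{G_1, H_1}}\| + \tfrac{t^3}{24}\,\|\comm{H_1, \comm{H_1, G_1}}\|$; the identity $\comm{H_1, \comm{H_1, G_1}} = -\comm{H_1, \comm{G_1, H_1}}$ then shows that these are exactly the $\gamma_1 = 1$ contributions, completing the induction.

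The main obstacle is bookkeeping rather than analysis: one needs to verify carefully that the reindexing aligns the inner partial sums and that the sign flip when swapping arguments inside the inner commutator leaves the spectral norm invariant. A more direct alternative would be to apply Theorem~\ref{thm:higher_order_bounds_prefactors} with the centered choice $s = \Gamma$ to the Strang formula of length $K = 2\Gamma - 1$ and reorganize the resulting nested commutators using Hermiticity of the $H_\gamma$ and the palindrome symmetry of the formula; however, the inductive route above avoids having to parse the multinomial indexing at $p = 2$ and isolates the two distinctive prefactors $\tfrac{1}{12}$ and $\tfrac{1}{24}$ cleanly from the base case.
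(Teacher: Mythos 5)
Your proposal is correct, and it is essentially the argument the paper attributes to \cite{Childs2021}: the induction on $\Gamma$ that peels off $\e^{-i t H_1/2}$ on both sides, uses unitary invariance, and invokes the two-term bound \eqref{eq:strang_bound_2h_demo} for the pair $\{H_1, G_1\}$ is exactly the ``telescoping sum and self-similarity of the Strang splitting method for fewer terms'' that the paper cites rather than reproduces. The base case and the sign flip $\comm{H_1,\comm{H_1,G_1}} = -\comm{H_1,\comm{G_1,H_1}}$ both check out, so no gap remains.
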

The proof of this proposition in \cite{Childs2021} uses a telescoping sum and the self-similarity of the Strang splitting method for fewer terms.
This technique is not straightforwardly generalizable to higher-order product formulas.

We explicate the improvement offered by Proposition~\ref{prop:tight_bound_strang} in Appendix~\ref{sec:prefactor_comparison}.

\section{Lattice geometry and decomposition of the Hamiltonian}
\label{sec:lattice_decomposition}

Our goal is to decompose the Fermi-Hubbard Hamiltonian \eqref{eq:def_H_FH} into $H_{\text{FH}} = \sum_{\gamma=1}^{\Gamma} H_\gamma$, such that each $H_{\gamma}$ consists of local terms with disjoint support.
The operators $H_1, \dots, H_{\Gamma-1}$ will contain the kinetic terms and the last operator $H_{\Gamma}$ the on-site interactions.

Let $\Lambda$ denote the underlying spatial lattice of spin-endowed sites.
We will assume periodic boundary conditions throughout and finally take the thermodynamic limit of infinite lattice size.
For example, $\Lambda = \Z_{/L}$ for a one-dimensional lattice of size $L$ with periodic boundary conditions and even $L$.
The number of lattice sites is denoted $\abs{\Lambda}$.
It will turn out to be convenient to establish a sublattice $\Lambda' \subset \Lambda$ such that each $H_{\gamma}$ is translation invariant with respect to $\Lambda'$.
For the one-dimensional example $\Lambda = \Z_{/L}$, we will set $\Lambda' = (2 \Z)_{/L} = \{ 0, 2, \dots, L-2 \}$.

We define an individual hopping term as
\begin{equation}
\label{eq:def_hopping}
h_{ij\sigma}^{} = a_{i\sigma}^\dagger a_{j\sigma}^{} + a_{j\sigma}^\dagger a_{i\sigma}^{}
\end{equation}
(assuming $i \neq j$), and the number operator
\begin{equation}
n_{i\sigma}^{} = a_{i\sigma}^\dagger a_{i\sigma}^{},
\end{equation}
where $i, j \in \Lambda$ denote lattice sites and $\sigma \in \{ \uparrow, \downarrow \}$ the spin.
Additionally, we will encounter the ``signed hopping term''
\begin{equation}
\label{eq:def_signed_hopping}
\tilde{h}_{ij\sigma}^{} = a_{i\sigma}^\dagger a_{j\sigma}^{} - a_{j\sigma}^\dagger a_{i\sigma}^{},
\end{equation}
which is zero in case $i = j$.
Observe the (anti-)symmetry relations
\begin{subequations}
\begin{align}
h_{ij\sigma}^{}         &= h_{j i \sigma}^{} \label{eq:hopping-symmetry}, \\
\tilde{h}_{ij\sigma}^{} &= -\tilde{h}_{j i \sigma}^{}. \label{eq:signed_hopping-antisymmetry}
\end{align}
\end{subequations}

We now introduce common lattice geometries and corresponding Hamiltonian decompositions.
These define the scenarios studied in the present paper in detail.
The overarching theoretical framework is general and applicable to other geometries as well.

\subsection{One-dimensional lattice}

As mentioned, $\Lambda = \Z_{/L}$ with $L$ even and $\Lambda' = (2 \Z)_{/L}$.
We can decompose $H_{\text{FH}} = H_1 + H_2 + H_3$ with $H_1$ containing the ``even-odd'' kinetic hopping terms, $H_2$ the ``odd-even'' hopping terms and $H_3$ the interactions, as illustrated in Fig.~\ref{fig:decomposition_H_1d}.
Explicitly,
\begin{subequations}
\label{eq:H_terms_1d}
\begin{align}
H_1 &= v \sum_{i \in \Lambda'} \sum_{\sigma \in \{\uparrow, \downarrow\}} h_{i, i+1, \sigma}^{},\\
H_2 &= v \sum_{i \in \Lambda'} \sum_{\sigma \in \{\uparrow, \downarrow\}} h_{i-1, i, \sigma}^{},\\
H_3 &= u \sum_{i \in \Lambda'} \big( n_{i,\uparrow}^{} n_{i,\downarrow}^{} + n_{i+1,\uparrow}^{} n_{i+1,\downarrow}^{} \big).
\end{align}
\end{subequations}
By construction, each $H_{\gamma}$ is translation invariant with respect to $\Lambda'$, i.e., by a shift of two sites.

\begin{figure}[!ht]
\centering
\subfloat[hopping terms in $H_1$]{%
\begin{tikzpicture}[scale=0.9]
\draw[gray] (0, 0) -- (6, 0);
\foreach \x in {0, 2, 4}
{
    \draw[thick, red] (\x, 0) -- (\x+1, 0);
}
\foreach \x in {0,...,5}
{
    \draw[fill=white] (\x, 0) circle (0.1);
}
\end{tikzpicture}}\\
\subfloat[hopping terms in $H_2$]{%
\begin{tikzpicture}[scale=0.9]
\draw[gray] (0, 0) -- (6, 0);
\foreach \x in {1, 3, 5}
{
    \draw[thick, red] (\x, 0) -- (\x+1, 0);
}
\foreach \x in {0,...,5}
{
    \draw[fill=white] (\x, 0) circle (0.1);
}
\end{tikzpicture}}\\
\subfloat[interaction terms in $H_3$]{%
\begin{tikzpicture}[scale=0.9]
\draw[gray] (0, 0) -- (6, 0);
\foreach \x in {0,...,5}
{
    \fill[red]        (\x, 0) circle (0.14);
    \draw[fill=white] (\x, 0) circle (0.1);
}
\end{tikzpicture}}%
\caption{Decomposition of the Fermi-Hubbard Hamiltonian on a one-dimensional lattice as $H_{\text{FH}} = \sum_{\gamma=1}^3 H_{\gamma}$, such that terms within each $H_{\gamma}$ have disjoint support.}
\label{fig:decomposition_H_1d}
\end{figure}
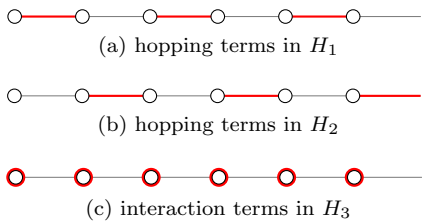

\subsection{Two-dimensional square lattice}

We consider a square $L \times L$ lattice with periodic boundary conditions and even $L$, i.e., $\Lambda = (\Z_{/L})^2$.

\begin{figure}[!ht]
\centering
\subfloat[hopping terms in $H_1$]{%
\begin{tikzpicture}[scale=0.9]
\draw[gray] (-2, -2) grid (2, 2);
\foreach \y in {-2, 0}
{
    \foreach \x in {-2, 0}
    {
        \draw[thick, red] (\x, \y) -- (\x+1, \y) -- (\x+1, \y+1) -- (\x, \y+1) -- cycle;
    }
}
\foreach \y in {-2,...,2}
{
    \foreach \x in {-2,...,2}
    {
        \draw[fill=white] (\x, \y) circle (0.1);
    }
}
\end{tikzpicture}}%
\hspace{0.05\columnwidth}%
\subfloat[hopping terms in $H_2$]{%
\begin{tikzpicture}[scale=0.9]
\draw[gray] (-2, -2) grid (2, 2);
\foreach \y in {-1, 1}
{
    \foreach \x in {-1, 1}
    {
        \draw[thick, red] (\x, \y) -- (\x+1, \y) -- (\x+1, \y+1) -- (\x, \y+1) -- cycle;
    }
}
\foreach \y in {-2,...,2}
{
    \foreach \x in {-2,...,2}
    {
        \draw[fill=white] (\x, \y) circle (0.1);
    }
}
\end{tikzpicture}} \\
\subfloat[interaction terms in $H_3$]{%
\begin{tikzpicture}[scale=0.9]
\draw[gray] (-2, -2) grid (2, 2);
\foreach \x in {-2,...,2}
{
    \foreach \y in {-2,...,2}
    {
        \fill[red]        (\x, \y) circle (0.14);
        \draw[fill=white] (\x, \y) circle (0.1);
    }
}
\end{tikzpicture}}
\caption{Decomposition of the Fermi-Hubbard Hamiltonian on a two-dimensional square lattice as $H_{\text{FH}} = \sum_{\gamma=1}^3 H_{\gamma}$, grouping the kinetic hopping terms into plaquettes.}
\label{fig:decomposition_H_2d_square}
\end{figure}
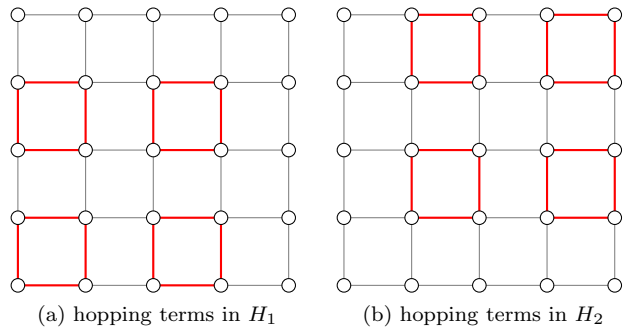

A small Trotter splitting error is achievable by partitioning the kinetic terms into plaquettes, as proposed and studied in Ref.~\cite{Campbell2021} and illustrated in Fig.~\ref{fig:decomposition_H_2d_square}.
By construction, each $H_\gamma$ is shift-invariant with respect to integer multiples of the vectors $(2, 0)$ and $(0, 2)$.
These span the sublattice $\Lambda' = ((2 \Z)_{/L})^2$.
The summands in the decomposition $H_{\text{FH}} = H_1 + H_2 + H_3$ can then be expressed as
\begin{subequations}
\label{eq:H_terms_2d_square}
\begin{align}
H_1 &= v \sum_{i \in \Lambda'} \sum_{\sigma \in \{\uparrow, \downarrow\}} \sum_{k=1}^4 h_{i+p_k, i+p_{k+1}, \sigma}^{},\\
H_2 &= v \sum_{i \in \Lambda'} \sum_{\sigma \in \{\uparrow, \downarrow\}} \sum_{k=1}^4 h_{i+p_k-(1,1), i+p_{k+1}-(1,1), \sigma}^{},\\
H_3 &= u \sum_{i \in \Lambda'} \sum_{k=1}^4 n_{i+p_k,\uparrow}^{} n_{i+p_k,\downarrow}^{}
\end{align}
\end{subequations}
where we have enumerated the vertex coordinates of a plaquette as $p_1 = (0, 0)$, $p_2 = (1, 0)$, $p_3 = (1, 1)$, $p_4 = (0, 1)$ and set $p_5 = p_1$.
This decomposition assumes that the four elementary hopping terms of a plaquette can be realized simultaneously.

Alternatively, we could separate the kinetic terms into horizontal and vertical hopping directions, and then each in turn into an even-odd partitioning, requiring four kinetic Hamiltonians in total.
As an advantage, each such Hamiltonian would only contain non-overlapping elementary hopping terms (instead of the four hopping terms subsumed in a plaquette).
However, it turns out that the resulting error bound is considerably weaker, and hence we opted for the plaquette Trotter splitting.

\subsection{Triangular lattice}

We consider the lattice structure shown in Fig.~\ref{fig:2d_triangular}.
By convention, the distance between nearest-neighbor lattice points is $1$.
\begin{figure}[!ht]
\centering
\begin{tikzpicture}[scale=0.9, >=stealth]
\draw[gray, ->] (0, 0) -- (3, 0) node[below] {$x$};
\draw[gray, ->] (0, 0) -- (0, 3) node[left]  {$y$};
\def\shifts{{{0, 0}}, {{1, 1}}, {{1, -1}}, {{-1, 1}}, {{-1, -1}}, {{0, 2}}, {{0, -2}}}
\foreach \sh in \shifts
{
    \begin{scope}[shift={(\sh[0]*1.5, \sh[1]*0.5*sqrt(3))}]
        \foreach \i in {0,...,5}
        {
            \draw[gray] (0, 0) -- ({cos(\i*60)}, {sin(\i*60)});
            \draw ({cos(\i*60)}, {sin(\i*60)}) -- ({cos((\i+1)*60)}, {sin((\i+1)*60)});
        }
    \end{scope}
}
\foreach \sh in \shifts
{
    \begin{scope}[shift={(\sh[0]*1.5, \sh[1]*0.5*sqrt(3))}]
        \draw[thick, red] (0, 0) -- (1, 0) -- ({cos(60)}, {sin(60)}) -- (0, 0);
    \end{scope}
}
\draw[->, very thick, blue] (0, 0) -- ( 0.925*1.5, {0.925*0.5*sqrt(3)});
\draw[->, very thick, blue] (0, 0) -- ( 0,         {0.925    *sqrt(3)});
\foreach \sh in \shifts
{
    \begin{scope}[shift={(\sh[0]*1.5, \sh[1]*0.5*sqrt(3))}]
        \draw[fill=black] (0, 0) circle (0.1);
        \foreach \i in {0,...,5}
        {
            \draw[fill=white] ({cos(\i*60)}, {sin(\i*60)}) circle (0.1);
        }
    \end{scope}
}
\end{tikzpicture}
\caption{Triangular lattice setup. The filled (black) lattice points are hexagon centers and define the sublattice $\Lambda'$. The red edges visualize the hopping terms of $H_1$ after decomposing the Hamiltonian as $H_{\text{FH}} = \sum_{\gamma=1}^4 H_{\gamma}$, see Eq.~\eqref{eq:H_terms_2d_triangular}.}
\label{fig:2d_triangular}
\end{figure}
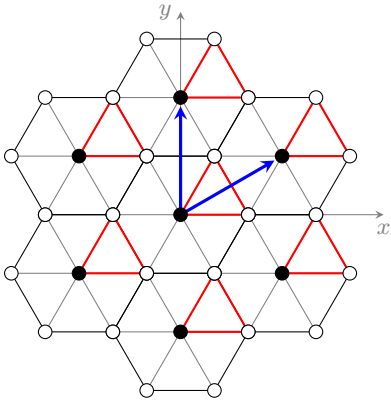
The triangular lattice $\Lambda$ is spanned by integer multiples of the unit cell vectors $(1, 0)$ and $(\frac{1}{2}, \frac{\sqrt{3}}{2})$.
We take the sublattice $\Lambda'$ to consist of the filled (black) points, i.e., the centers of the hexagons.
The unit cell vectors of $\Lambda'$ (blue vectors in Fig.~\ref{fig:2d_triangular}) have coordinates $(\frac{3}{2}, \frac{\sqrt{3}}{2})$ and $(0, \sqrt{3})$.

There are several possibilities of how to decompose the kinetic part of the Hamiltonian.
We choose to partition it into three summands $H_1, H_2, H_3$.
Each of them, in turn, contains (shifted copies of) \emph{three} hopping terms along the edges of a triangle, illustrated in red for $H_1$ in Fig.~\ref{fig:2d_triangular}.
$H_2$ is then obtained from $H_1$ via a rotation by $\frac{2\pi}{3}$ and $H_3$ via another rotation by $\frac{2\pi}{3}$.
The decomposition aims at a small number of Hamiltonian terms $H_{\gamma}$, such that the translated copies of the hopping operators contained in each $H_{\gamma}$ have disjoint support and hence commute.
Denoting the vertex coordinates of the hexagon at the origin as $g_k = (\cos((k-1) \frac{\pi}{3}), \sin((k-1) \frac{\pi}{3}))$, $k = 1, \dots, 6$, and setting $g_7 = g_1$, we can express
\begin{subequations}
\label{eq:H_terms_2d_triangular}
\begin{align}
\label{eq:H_terms_2d_triangular_kinetic}
H_{\ell} &= v \sum_{i \in \Lambda'} \sum_{\sigma \in \{\uparrow, \downarrow\}} \big( h_{i, i+g_{2\ell-1}, \sigma}^{} \\
&\qquad + h_{i+g_{2\ell-1}, i+g_{2\ell}, \sigma}^{} + h_{i+g_{2\ell}, i, \sigma}^{} \big) \text{ for } \ell = 1, 2, 3, \notag \\
\label{eq:H_terms_2d_triangular_interaction}
H_4 &= u \sum_{i \in \Lambda'} \left( n_{i,\uparrow}^{} n_{i,\downarrow}^{} + \frac{1}{3} \sum_{k=1}^6 n_{i+g_k,\uparrow}^{} n_{i+g_k,\downarrow}^{} \right).
\end{align}
\end{subequations}
The interaction part $H_4$ subsumes the local interactions on a hexagon.
The factor $\frac{1}{3}$ compensates for overcounting since each white lattice point is shared between three hexagons.
Note that the above representation of $H_4$ is invariant with respect to rotations by $\frac{\pi}{3}$.
Evaluating $\e^{-i t H_4}$ is possible based on the representation $H_4 = u \sum_{i \in \Lambda} n_{i\uparrow}^{} n_{i\downarrow}^{}$.

We have found it advantageous to use the following integer representation of lattice coordinates to avoid rounding errors:
We apply the linear conformal map $\R^2 \to \R^3$ defined in terms of lattice unit cell vectors by $(1, 0) \mapsto (2, -1, -1)$ and $(\frac{1}{2}, \frac{\sqrt{3}}{2}) \mapsto (1, 1, -2)$, which sends $\Lambda$ to a sublattice of the cubic lattice in three dimensions.
Each lattice point now has integer coordinates which sum to zero; these properties are inherited from the new unit cell vectors.
The unit cell vectors of $\Lambda'$ read $(3, 0, -3)$ and $(0, 3, -3)$.
Note that angles are preserved.

\section{Automated commutator evaluation and norm bounds}

Having a decomposition $H_{\text{FH}} = \sum_{\gamma=1}^{\Gamma} H_\gamma$ available, the next task consists of evaluating (nested) commutators between the $H_\gamma$ operators.

\subsection{Commutators of elementary operators}
\label{sec:commutators_elementary}

Commutators of the hopping and number operator terms follow from the fermionic anti-commutation relations, see, e.g., \cite{Helgaker2000}.
To be self-contained, we summarize them here, and provide a derivation in Appendix~\ref{sec:commutation_term_level}.
For sites $i, j, k, \ell \in \Lambda$ and spin orientations $\sigma, \tau \in \{ \uparrow, \downarrow \}$,
\begin{equation}
\label{eq:comm_disjoint_support}
\begin{split}
\comm{h_{ij\sigma}^{}, h_{k\ell\tau}^{}} &= 0, \quad %
\comm{\tilde{h}_{ij\sigma}^{}, \tilde{h}_{k\ell\tau}^{}} = 0, \quad %
\comm{h_{ij\sigma}^{}, \tilde{h}_{k\ell\tau}^{}} = 0, \\
\comm{h_{ij\sigma}^{}, n_{k\tau}^{}} &= 0, \quad %
\comm{\tilde{h}_{ij\sigma}^{}, n_{k\tau}^{}} = 0
\end{split}
\end{equation}
in case $\{ i, j \} \cap \{ k, \ell \} = \emptyset$ (disjoint support) or $\sigma \neq \tau$.

Number operators always commute: for all lattice sites $i, j \in \Lambda$ and $\sigma, \tau \in \{ \uparrow, \downarrow \}$,
\begin{equation}
\label{eq:comm_number_op}
\comm{n_{i\sigma}^{}, n_{j\tau}^{}} = 0.
\end{equation}

For $i, j, k \in \Lambda$ with $i \neq j$ and $j \neq k$ and $\sigma \in \{ \uparrow, \downarrow \}$,
\begin{subequations}
\begin{align}
\label{eq:comm_h_h}
\comm{h_{ij\sigma}^{}, h_{jk\sigma}^{}} &= \tilde{h}_{ik\sigma}^{}, \\
\label{eq:comm_g_g}
\comm{\tilde{h}_{ij\sigma}^{}, \tilde{h}_{jk\sigma}^{}} &= \tilde{h}_{ik\sigma}^{}, \\
\label{eq:comm_h_g}
\comm{h_{ij\sigma}^{}, \tilde{h}_{jk\sigma}^{}} &= \begin{cases} 2 \left(n_{i\sigma}^{} - n_{j\sigma}^{}\right), & i = k \\ h_{ik\sigma}^{}, & i \neq k \end{cases}
\end{align}
\end{subequations}
as well as, for $i \neq j$,
\begin{subequations}
\begin{align}
\label{eq:comm_h_n}
\comm{h_{ij\sigma}^{}, n_{j\sigma}^{}} &= \tilde{h}_{ij\sigma}^{}, \\
\label{eq:comm_g_n}
\comm{\tilde{h}_{ij\sigma}^{}, n_{j\sigma}^{}} &= h_{ij\sigma}^{}.
\end{align}
\end{subequations}

\subsection{Commutators of Hamiltonian operators}

In view of Theorem~\ref{thm:higher_order_bounds_prefactors}, we now evaluate (nested) commutators between the $H_\gamma$ operators.
The preceding subsection together with the general relations
\begin{equation}
\label{eq:commutator_sum}
\comm{A, B_1 + \dots + B_n} = \comm{A, B_1} + \dots + \comm{A, B_n}
\end{equation}
and
\begin{multline}
\label{eq:commutator_product}
\comm{A, B_1 \cdots B_n} = \comm{A, B_1} B_2 \cdots B_n \\
+ B_1 \comm{A, B_2} B_3 \cdots B_n + \dots \\
+ B_1 \cdots B_{n-1} \comm{A, B_n}
\end{multline}
show that these commutators can be expressed solely in terms of sums and products of the elementary operators $h_{ij\sigma}^{}$, $\tilde{h}_{ij\sigma}^{}$ and $n_{i\sigma}^{}$.

We can exploit translation invariance on $\Lambda'$ based on the following observation: let $A$ and $B$ be linear operators with local support (i.e., acting non-trivially only on a local region of lattice sites), and denote their versions after translation by $i \in \Lambda'$ as $A_i$ and $B_i$, respectively.
Then
\begin{equation}
\label{eq:commutator_sublattice_translations}
\comm*{\sum_{i \in \Lambda'} A_i, \sum_{k \in \Lambda'} B_k} = \sum_{i \in \Lambda'} \comm*{A_i, \sum_{\ell \in \Lambda'} B_{i + \ell}},
\end{equation}
where we have used the periodic boundary conditions of the overall lattice.
Together with the commutation relations for the individual hopping and interaction terms, one obtains, for example on the two-dimensional square lattice
\begin{equation}
\comm{H_1, H_2} = v^2 \sum_{i \in \Lambda'} \sum_{\sigma \in \{\uparrow, \downarrow\}} \left(\tilde{h}_{i, i+2e_1, \sigma}^{} - \tilde{h}_{i-e_1, i+e_1, \sigma}^{}\right)
\end{equation}
and
\begin{equation}
\comm{H_1, H_5} = v u \sum_{i \in \Lambda'} \sum_{\sigma \in \{\uparrow, \downarrow\}} \left(\tilde{h}_{i, i+e_1, \sigma}^{} \cdot (n_{i+e_1, \bar{\sigma}}^{} - n_{i, \bar{\sigma}}^{}) \right),
\end{equation}
where $\bar{\sigma}$ denotes the flipped spin.

\subsection{Automated symbolic commutator evaluation}

To automate the symbolic evaluation of (nested) commutators, we have implemented as Python package for this purpose, available at \cite{fermi_hubbard_commutators}.
It defines tailored Python classes to represent Hamiltonian operators and commutators between them.
The classes share an abstract base class \texttt{HamiltonianOp}:
\begin{itemize}
\item \texttt{HoppingOp} for representing $\alpha \cdot h_{ij\sigma}^{}$ with $\alpha \in \R$, storing the lattice coordinates $i, j$ as tuples, the spin $\sigma$ as integer, and a scalar coefficient $\alpha$ as floating point number,
\item \texttt{AntisymmHoppingOp} for representing $\alpha \cdot \tilde{h}_{ij\sigma}^{}$ with analogous member variables,
\item \texttt{NumberOp} to represent $\alpha \cdot n_{i\sigma}^{}$, storing the lattice coordinate $i$ as tuple, the spin $\sigma$ as integer and the scalar coefficient $\alpha$ as floating point number,
\item \texttt{ZeroOp} for the zero operation,
\item \texttt{ProductOp} to represent a product of Hamiltonian operators, stored in a Python list,
\item \texttt{SumOp} to represent a sum of Hamiltonian operators.
\end{itemize}
The $\texttt{ZeroOp}$ class could in principle be replaced by an empty sum, but we have found it convenient to indicate that an expression is zero and simplify derived expressions.
For example, in case a \texttt{ZeroOp} object appears as factor in a \texttt{ProductOp}, the overall product is zero.

In Sect.~\ref{sec:lattice_decomposition} we have consistently written each Hamiltonian term $H_{\gamma}$ as translated copies of some local term $h^{\text{loc}}_{\gamma}$ with respect to a sublattice $\Lambda'$, i.e., in the form
\begin{equation}
\label{eq:H_translated_local_op}
H_{\gamma} = \sum_{i \in \Lambda'} h^{\text{loc}}_{\gamma,i},
\end{equation}
with $h^{\text{loc}}_{\gamma,i}$ a copy of $h^{\text{loc}}_{\gamma}$ shifted by lattice vector $i$.
In our implementation, we mimic Eq.~\eqref{eq:H_translated_local_op} by only storing $h^{\text{loc}}_{\gamma}$ together with an instance of an auxiliary class \texttt{SubLattice} for representing $\Lambda'$.
This class contains the unit cell vectors of $\Lambda'$.

Evaluating commutators is achieved by a straightforward implementation of the relations in Sect.~\ref{sec:commutators_elementary} together with Eqs.~\eqref{eq:commutator_sum} and \eqref{eq:commutator_product}.
Translations are taken into account using Eq.~\eqref{eq:commutator_sublattice_translations}, which retains the form \eqref{eq:H_translated_local_op}.
For evaluating the commutator on the right in Eq.~\eqref{eq:commutator_sublattice_translations}, we enumerate all lattice vectors $\ell \in \Lambda'$ for which the support regions of $A_i$ and $B_{i + \ell}$ overlap.
Note that we only need to consider $i = 0$ (origin).

To apply Theorem~\ref{thm:higher_order_bounds_prefactors}, we have implemented a function to evaluate the expression on the right of Eq.~\eqref{eq:product_formula_error_bound}, retaining the nested commutators appearing in this expression in symbolic form at first.
In order to evaluate or upper-bound the spectral norm of a nested commutator like $\norm{\ad_{A_s}^{q_s} \cdots \ad_{A_j}^{q_j} B_j}$, we use the following strategy:
In case the operator acts non-trivially on at most 14 fermionic modes, where a mode refers to a lattice site and spin orientation, we compute the matrix representation of the operator and evaluate its spectral norm numerically exactly.
To exploit particle number conservation, which is adhered to by all involved operators, the computation uses the particle number sub-blocks of the matrix.
We also evaluate the exact norm for quadratic (free fermion) operators, i.e., consisting of linear combinations of hopping and number operators: in this case the spectral norm can be reduced to a sum of single-particle eigenvalues.
Otherwise, for non-quadratic operators involving more modes, we partition them into clusters supported on up to 14 modes each and make use of the triangle inequality to obtain an upper bound.
To avoid explicit dependence on system size, we report the spectral norm bounds as error per lattice site.

Another subtle point are telescoping effects.
Consider an operator containing two or more local summands, like $H_{\gamma} = \sum_{i \in \Lambda'} (A_i + B_i)$.
Then $\norm{H_{\gamma}} \le \sum_{i \in \Lambda'} \norm{A_i + B_i}$ by the triangle inequality.
Due to the periodic boundary conditions, we can also represent $H_{\gamma} = \sum_{i \in \Lambda'} (A_i + B_{i+\ell})$ for any fixed $\ell \in \Lambda'$, and correspondingly, $\norm{H_{\gamma}} \le \sum_{i \in \Lambda'} \norm{A_i + B_{i+\ell}}$.
The bound will depend on $\ell$ in general.
In order to arrive at a bound as tight as possible, we maximize the overlap (lattice support) of the local terms in our implementation.

\section{Commutator bounds and error analysis results}
\label{sec:results}

It is instructive to demonstrate the analytic evaluation of the commutator bounds explicitly for the concrete example of a one-dimensional lattice and the Strang (second-order Suzuki) splitting method.
For higher-order product rules and two-dimensional lattices, we will use the automated symbolic evaluation since the algebraic manipulations become rather tedious.

\subsection{Analytical derivation for a one-dimensional lattice and Strang splitting}

We consider the Hamiltonian terms in Eq.~\eqref{eq:H_terms_1d} and the Strang (second-order Suzuki) formula.
We make use of the theoretical bound of Proposition~\ref{prop:tight_bound_strang}, as concretized in Eq.~\eqref{eq:tight_bound_strang_three_terms} for the present setting.
Regarding the individual commutators, we first evaluate the commutator of the kinetic Hamiltonian terms:
\begin{equation}
\comm*{H_2, H_1} = v^2 \sum_{i \in \Lambda'} \sum_{\sigma \in \{\uparrow, \downarrow\}} \left(\tilde{h}_{i-1, i+1, \sigma}^{} - \tilde{h}_{i, i+2, \sigma}^{}\right).
\end{equation}
The nested commutator with $H_1$, $H_2$ and $H_3$ is then
\begin{subequations}
\begin{align}
\label{eq:comm_bound_1d_H1_H2_H1}
\comm*{H_1, \comm*{H_2, H_1}} &= 2 v^3 \sum_{i \in \Lambda'} \sum_{\sigma \in \{\uparrow, \downarrow\}} \left(h_{i-2, i+1, \sigma}^{} - h_{i-1, i, \sigma}^{}\right),\\
\comm*{H_2, \comm*{H_2, H_1}} &= 2 v^3 \sum_{i \in \Lambda'} \sum_{\sigma \in \{\uparrow, \downarrow\}} \left(h_{i, i+1, \sigma}^{} - h_{i-1, i+2, \sigma}^{}\right)
\end{align}
\end{subequations}
and
\begin{equation}
\begin{split}
&\comm*{H_3, \comm*{H_2, H_1}} \\
&= v^2 u \sum_{i \in \Lambda'} \sum_{\sigma \in \{\uparrow, \downarrow\}} \Big(h_{i-1, i+1, \sigma}^{} \cdot \left(n_{i-1, \bar{\sigma}}^{} - n_{i+1, \bar{\sigma}}^{}\right) \\
&\hspace{29mm} + h_{i, i+2, \sigma}^{} \cdot \left(n_{i+2, \bar{\sigma}}^{} - n_{i, \bar{\sigma}}^{}\right)\Big),
\end{split}
\end{equation}
with $\bar{\sigma}$ denoting the flipped spin.
Next, we evaluate the commutator between a kinetic and interaction Hamiltonian term:
\begin{subequations}
\begin{align}
\comm*{H_3, H_1} &= v u \sum_{i \in \Lambda'} \sum_{\sigma \in \{\uparrow, \downarrow\}} \tilde{h}_{i, i+1, \sigma}^{} \cdot \left(n_{i, \bar{\sigma}}^{} - n_{i+1, \bar{\sigma}}^{}\right),\\
\comm*{H_3, H_2} &= v u \sum_{i \in \Lambda'} \sum_{\sigma \in \{\uparrow, \downarrow\}} \tilde{h}_{i-1, i, \sigma}^{} \cdot \left(n_{i-1, \bar{\sigma}}^{} - n_{i, \bar{\sigma}}^{}\right).
\end{align}
\end{subequations}
Computing nested commutators then leads to
\begin{multline}
\comm*{H_1, \comm*{H_3, H_1}} \\
= - 4 v^2 u \sum_{i \in \Lambda'} \Big( \left(n_{i, \uparrow}^{} - n_{i+1, \uparrow}^{}\right) \left(n_{i, \downarrow}^{} - n_{i+1, \downarrow}^{}\right) \\
+ \tilde{h}_{i, i+1, \uparrow}^{} \cdot \tilde{h}_{i, i+1, \downarrow}^{} \Big),
\end{multline}
\begin{equation}
\label{eq:comm_H2_H3_H1}
\begin{split}
&\comm*{H_2, \comm*{H_3, H_1}} \\
&= v^2 u \sum_{i \in \Lambda'} \sum_{\sigma \in \{\uparrow, \downarrow\}} \Big( \left(h_{i-1, i+1, \sigma}^{} - h_{i, i+2, \sigma}^{}\right) \\
&\hspace{34mm}\cdot \left(n_{i, \bar{\sigma}}^{} - n_{i+1, \bar{\sigma}}^{}\right) \\
&\hspace{10mm}+ \tilde{h}_{i-1, i, \sigma}^{} \cdot \tilde{h}_{i, i+1, \bar{\sigma}}^{} + \tilde{h}_{i, i+1, \sigma}^{} \cdot \tilde{h}_{i+1, i+2, \bar{\sigma}}^{} \Big),
\end{split}
\end{equation}
and
\begin{equation}
\begin{split}
&\comm*{H_3, \comm*{H_3, H_1}} = v u^2 \sum_{i \in \Lambda'} \sum_{\sigma \in \{\uparrow, \downarrow\}} h_{i, i+1, \sigma}^{} \left(n_{i, \bar{\sigma}}^{} - n_{i+1, \bar{\sigma}}^{}\right)^2.
\end{split}
\end{equation}
Analogous expressions hold for $\comm{H_1, \comm{H_3, H_2}}$, $\comm{H_2, \comm{H_3, H_2}}$ and $\comm{H_3, \comm{H_3, H_2}}$.

We report an upper bound on the spectral norm of the commutators \emph{per lattice site}, by omitting the summation over $i \in \Lambda'$ and diving by $2$ (since $\Lambda'$ only covers every second site).
For example, applying this procedure to the expression in Eq.~\eqref{eq:comm_bound_1d_H1_H2_H1} and using the triangle inequality gives
\begin{equation}
\begin{split}
\frac{1}{\abs{\Lambda}} \norm*{\comm*{H_1, \comm*{H_2, H_1}}} %
&\le \abs{v}^3 \sum_{\sigma \in \{\uparrow, \downarrow\}} \norm*{h_{-2, 1, \sigma}^{} - h_{-1, 0, \sigma}^{}} \\
&= 4 \abs{v}^3.
\end{split}
\end{equation}
One also recognizes that $\norm{n_{i, \sigma}^{} - n_{j, \sigma}^{}} = 1$ for any $i \neq j$.

Regarding $\norm{\comm{H_2, \comm{H_3, H_1}}}$, we form the matrix representation $\in \R^{256 \times 256}$ of the summand in Eq.~\eqref{eq:comm_H2_H3_H1} (for fixed $i \in \Lambda'$) and compute its exact spectral norm.
Specifically, the matrix is symmetric and has largest eigenvalue and singular value $4$.
For comparison, bounding the norm using the triangle and sub-multiplicative properties yields $8$, which is thus not tight.

Inserting everything into Eq.~\eqref{eq:tight_bound_strang_three_terms} leads to
\begin{equation}
\label{eq:bound_eval_1d_strang}
\frac{1}{\abs{\Lambda}} \norm*{\mathscr{S}_2(t) - \e^{-i t H_{\text{FH}}}} \le \frac{t^3}{6} \left( 3 \abs{v}^3 + 4 \abs{v}^2 \abs{u} + \abs{v} \abs{u}^2 \right).
\end{equation}
This formula is the first concluding result of this paper.

\subsection{Higher-order splitting methods for a one-dimensional lattice}

We now make use of the automated symbolic commutator evaluation to derive error bounds on fourth-order methods as well, and compare with the empirical error evaluated on a small system.
The empirical error refers to the numerically exact evaluation of the time evolution operator, splitting method and deviation between them.
We first consider the fourth-order Suzuki formula.
In general, higher-order Suzuki formulas can be defined recursively via \cite{Suzuki1991}
\begin{align}
\mathscr{S}_2(t) &= \e^{-i t H_1/2} \cdots \e^{-i t H_\Gamma/2} \e^{-i t H_\Gamma/2} \cdots \e^{-i t H_1/2}, \\
\mathscr{S}_{2k}(t) &= \mathscr{S}_{2k-2}^2 (u_k t) \, \mathscr{S}_{2k-2} ((1 - 4 u_k) t) \, \mathscr{S}_{2k-2}^2 (u_k t)
\end{align}
with $u_k = 1 / (4 - 4^{1/(2k-1)})$ and $k \in \N$, $k \ge 2$.
The fourth-order Suzuki formula with three Hamiltonian terms has been analyzed in \cite[Appendix~M and Proposition~M.2]{Childs2021}, and our proof of Theorem~\ref{thm:higher_order_bounds_prefactors} generalizes the technique there.
Programmatically evaluating the coefficients in Eq.~\eqref{eq:product_formula_error_bound} reproduces \cite[Table~II]{Childs2021} when setting $s = 10$.
It turns out that $s = \ceil{\frac{K}{2}} = 11$ leads to a sharper bound, since some coefficients are smaller; for example, the coefficient in front of $\norm{\comm{H_3, \comm{H_3, \comm{H_3, \comm{H_3, H_2}}}}}$ is $0.0628$ for $s = 10$ and $0.0316$ for $s = 11$.

Together with evaluating norm bounds of the nested commutators, one arrives at
\begin{equation}
\label{eq:bound_eval_1d_suzuki4}
\begin{split}
&\frac{1}{\abs{\Lambda}} \norm*{\mathscr{S}_4(t) - \e^{-i t H_{\text{FH}}}} \le t^5 \big( 1.3405 \abs{v}^5 + 8.8233 \abs{v}^4 \abs{u} \\
&\quad + 2.3945 \abs{v}^3 \abs{u}^2 + 0.4137 \abs{v}^2 \abs{u}^3 + 0.06001 \abs{v} \abs{u}^4 \big)
\end{split}
\end{equation}
for the fourth-order Suzuki method and one-dimensional lattice setting.

For comparison, we investigate another fourth-order splitting method: the symmetric scheme AK~11-4 for three terms by Auzinger et al.~\cite{Auzinger2017}.
This leads to the estimate
\begin{equation}
\label{eq:bound_eval_1d_ak11_4}
\begin{split}
&\frac{1}{\abs{\Lambda}} \norm*{\mathscr{S}_4(t) - \e^{-i t H_{\text{FH}}}} \le t^5 \big( 3.0745 \abs{v}^5 + 28.2247 \abs{v}^4 \abs{u} \\
&\quad + 13.4738 \abs{v}^3 \abs{u}^2 + 4.9908 \abs{v}^2 \abs{u}^3 + 0.9155 \abs{v} \abs{u}^4 \big)
\end{split}
\end{equation}
for the AK~11-4 method, which is noticeably larger than the bound for the Suzuki method in Eq.~\eqref{eq:bound_eval_1d_suzuki4}.

As a remark, the norms of all nested commutators contained in Eqs.~\eqref{eq:bound_eval_1d_suzuki4} and \eqref{eq:bound_eval_1d_ak11_4} have been computed numerically exactly.

\begin{figure}[!ht]
\centering
\includegraphics[width=\columnwidth]{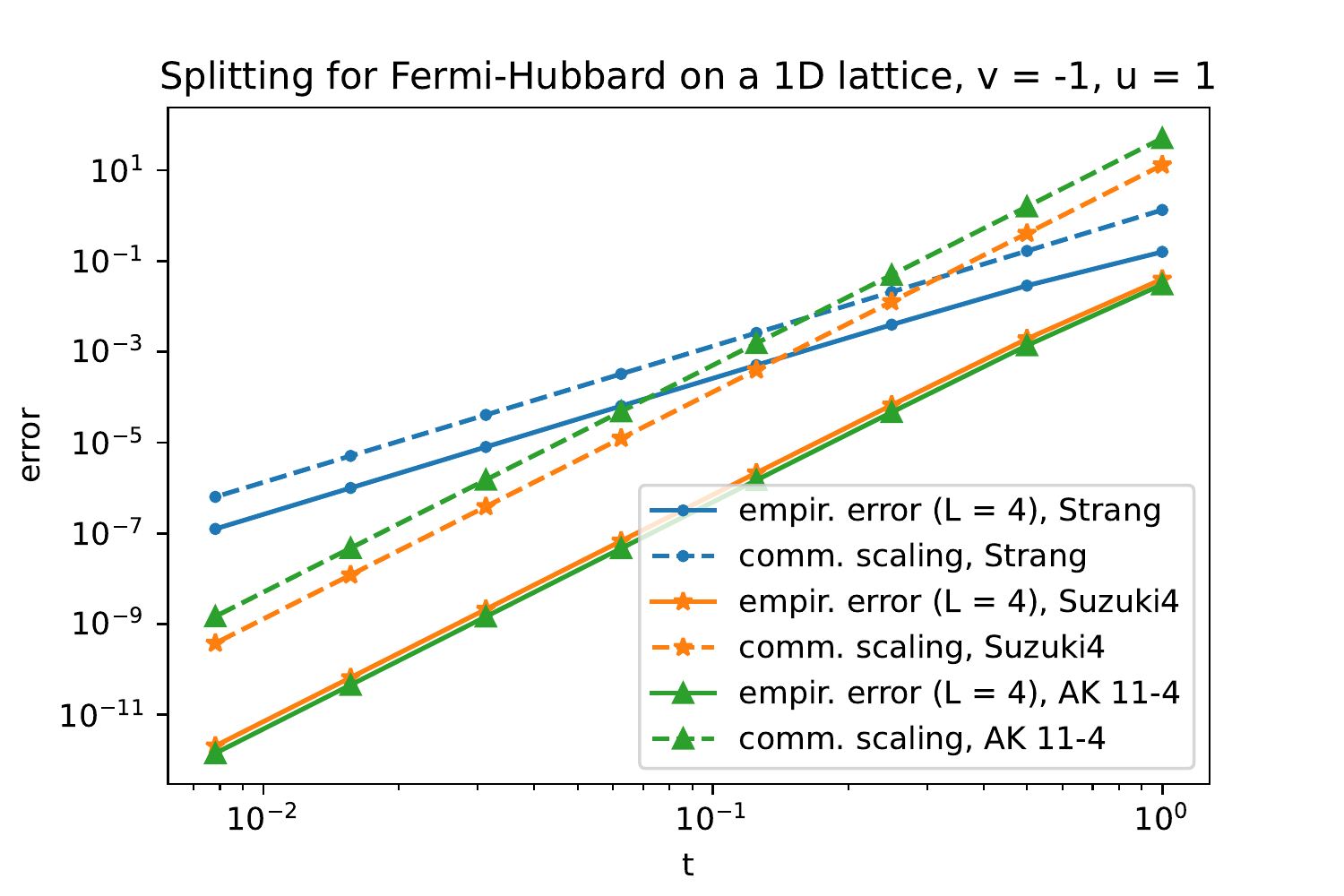}
\caption{Trotter splitting error $\frac{1}{\abs{\Lambda}} \norm*{\mathscr{S}(t) - \e^{-i t H_{\text{FH}}}}$ for the Fermi-Hubbard model on a one-dimensional lattice, comparing commutator bounds with the empirical error ($L = 4$ sites). The Hamiltonian coefficients are set to $v = -1$ and $u = 1$.}
\label{fig:fh_comm_1d_error}
\end{figure}

Fig.~\ref{fig:fh_comm_1d_error} summarizes the commutator bounds and empirical error on a one-dimensional lattice, both for the Strang (second-order Suzuki) formula and the fourth-order methods.
The commutator bounds overestimate the empirical error by an order of magnitude for Strang splitting, and by around three orders of magnitude for the fourth-order methods.
We evaluate the empirical error only for a small system due to the steep increase in computational cost for larger $L$, and thus the error for larger systems could deviate from the values in the plot.
In any case, the results indicate that the commutator bounds are far from tight.
Interestingly, the empirical error using the AK~11-4 scheme is slightly smaller compared to the Suzuki method, reversed from the bounds in Eqs.~\eqref{eq:bound_eval_1d_suzuki4} and \eqref{eq:bound_eval_1d_ak11_4}.

\subsection{Two-dimensional square lattice}

We now apply splitting methods to the Fermi-Hubbard model on a two-dimensional square lattice, using the three Hamiltonian terms in Eq.~\eqref{eq:H_terms_2d_square}.

For Strang splitting, we obtain the bound
\begin{multline}
\frac{1}{\abs{\Lambda}} \norm*{\mathscr{S}_2(t) - \e^{-i t H_{\text{FH}}}} \\
\le \frac{t^3}{6} \left( 4.4142 \abs{v}^3 + 8.0889 \abs{v}^2 \abs{u} + 1.3062 \abs{v} \abs{u}^2 \right)
\end{multline}
on the error per lattice site.
We have already accounted for a factor $4$ due to the sublattice $\Lambda'$ covering one quarter of all sites.

For the fourth-order Suzuki formula, an analogous calculation leads to
\begin{equation}
\label{eq:bound_eval_2d_square_suzuki4}
\begin{split}
&\frac{1}{\abs{\Lambda}} \norm*{\mathscr{S}_4(t) - \e^{-i t H_{\text{FH}}}} \le t^5 \big( 2.1485 \abs{v}^5 + 92.1642 \abs{v}^4 \abs{u} \\
&\quad + 14.3445 \abs{v}^3 \abs{u}^2 + 1.0712 \abs{v}^2 \abs{u}^3 + 0.07938 \abs{v} \abs{u}^4 \big).
\end{split}
\end{equation}
One notices the relatively small last coefficient.

\subsection{Two-dimensional triangular lattice}

Finally, we consider the triangular lattice shown in Fig.~\ref{fig:2d_triangular} and the four Hamiltonian terms in Eq.~\eqref{eq:H_terms_2d_triangular}.

For Strang splitting, we obtain the bound
\begin{multline}
\label{eq:bound_eval_2d_triangular_strang}
\frac{1}{\abs{\Lambda}} \norm*{\mathscr{S}_2(t) - \e^{-i t H_{\text{FH}}}} \\
\le \frac{t^3}{6} \left( 39.4721 \abs{v}^3 + 20.1594 \abs{v}^2 \abs{u} + 1.9546 \abs{v} \abs{u}^2 \right)
\end{multline}
on the error per lattice site.
The lattice $\Lambda$ outnumbers $\Lambda'$ by a factor $3$, which we have already taken into account here.

The analogous error bound on the fourth-order Suzuki formula reads
\begin{equation}
\label{eq:bound_eval_2d_triangular_suzuki4}
\begin{split}
&\frac{1}{\abs{\Lambda}} \norm*{\mathscr{S}_4(t) - \e^{-i t H_{\text{FH}}}} \le t^5 \big( 124.815 \abs{v}^5 + 493.917 \abs{v}^4 \abs{u} \\
&\quad + 60.4106 \abs{v}^3 \abs{u}^2 + 2.9855 \abs{v}^2 \abs{u}^3 + 0.1206 \abs{v} \abs{u}^4 \big).
\end{split}
\end{equation}
The coefficients are considerably larger compared to the analogous bound for the square lattice in Eq.~\eqref{eq:bound_eval_2d_square_suzuki4}.
As indication for the origin of this deviation, we remark that the number of substeps $K$ in the splitting methods differs: $K = 21$ for the square lattice (three Hamiltonian operators in Eqs.~\eqref{eq:H_terms_2d_square}) compared to $K = 31$ for the triangular lattice (four Hamiltonian operators in Eqs.~\eqref{eq:H_terms_2d_triangular}).

\section{Conclusions and outlook}

The comparison with the empirical Trotter error in Fig.~\ref{fig:fh_comm_1d_error} indicates that the commutator scaling bounds are not perfectly tight.
Interestingly, in the proof of Theorem~\ref{thm:higher_order_bounds_prefactors}, the expression \eqref{eq:S_diff_expitH} together the formulas for $\mathcal{C}_j(t)$ still describe the exact difference between the Trotter formula and actual time evolution operator.
The exponential rotations contained in the integrand of \eqref{eq:S_diff_expitH} lead to cancellations which are not accounted for when applying the triangle inequality.
In a future work, one could potentially exploit this observation to arrive at sharper bounds.

The decomposition of a Hamiltonian as $H = \sum_{\gamma} H_\gamma$ is not unique and can affect the resulting Trotter error.
The bounds developed in this work could be used to guide the search for a decomposition with smallest possible error.

We have studied three lattice geometries in detail as representative examples.
The generality of the framework and of the implementation in \cite{fermi_hubbard_commutators} allow for a straightforward application to other geometries and decompositions as well.
Moreover, an extension by additional particle species (like bosonic particles) or pairing interactions for superconductivity, for example, is likewise conceivable.

Ref.~\cite{Campbell2021} considers a variant of the Fermi-Hubbard model where the number operators in the interaction part are shifted by $\frac{1}{2}$.
This convention leads to concise expressions for the nested commutators and has the effect that some of them have smaller norms.
We leave an exploration of this approach and a detailed comparison with Ref.~\cite{Campbell2021} for future work.
Note that the commutators from Sect.~\ref{sec:commutators_elementary} remain unaffected when shifting the number operators by a constant.

Finally, we would like to remark that Trotter splitting methods can in principle be further improved by numerically optimizing the individual substeps tailored for a given Hamiltonian \cite{Mansuroglu2023, Tepaske2023, McKeever2023, Kotil2022}.

\begin{acknowledgments}
This research is part of the Munich Quantum Valley, which is supported by the Bavarian state government with funds from the Hightech Agenda Bayern Plus.
\end{acknowledgments}

\bibliography{references}

\appendix

\newpage

\section{Proof of the higher-order error bounds}
\label{sec:proof_higher_order_bounds}

This section contains a proof of Theorem~\ref{thm:higher_order_bounds_prefactors}.

We generalize the derivation in \cite[Appendix~M]{Childs2021}:
\begin{equation}
\begin{split}
&\frac{\ud}{\ud t} \mathscr{S}_p(t) - (-i H) \mathscr{S}_p(t) \\
&= \comm*{\e^{-i t A_K}, -i A_{K-1}} \e^{-i t A_{K-1}} \cdots \e^{-i t A_1} \\
&\: + \comm*{\e^{-i t A_K} \e^{-i t A_{K-1}}, -i A_{K-2}} \e^{-i t A_{K-2}} \cdots \e^{-i t A_1} \\
&\: + \dots \\
&\: + \comm*{\e^{-i t A_K} \cdots \e^{-i t A_2}, -i A_1} \e^{-i t A_1} \\
&= \comm*{\e^{-i t A_K}, -i A_{K-1}} \e^{-i t A_{K-1}} \cdots \e^{-i t A_1} \\
&\: + \e^{-i t A_K} \comm*{\e^{-i t A_{K-1}}, -i A_{K-2}} \e^{-i t A_{K-2}} \cdots \e^{-i t A_1} \\
&\: + \comm*{\e^{-i t A_K}, -i A_{K-2}} \e^{-i t A_{K-1}} \e^{-i t A_{K-2}} \cdots \e^{-i t A_1} \\
&\: + \dots \\
&\: + \e^{-i t A_K} \cdots \e^{-i t A_4} \e^{-i t A_3} \comm*{\e^{-i t A_2}, -i A_1} \e^{-i t A_1} \\
&\: + \e^{-i t A_K} \cdots \e^{-i t A_4} \comm*{\e^{-i t A_3}, -i A_1} \e^{-i t A_2} \e^{-i t A_1} \\
&\: + \dots \\
&\: + \comm*{\e^{-i t A_K}, -i A_1} \e^{-i t A_{K-1}} \cdots \e^{-i t A_2} \e^{-i t A_1}.
\end{split}
\end{equation}
For the first equal sign, we have used the consistency condition of the integration method, $\sum_{k=1}^K A_k = H$, and for the second equal sign, we have sequentially expanded the commutators, according to the blueprint
\begin{multline}
\comm{A_1 A_2 A_3, B} \\
= A_1 A_2 \comm{A_3, B} + A_1 \comm{A_2, B} A_3 + \comm{A_1, B} A_2 A_3.
\end{multline}
As next step, we sum up all commutators appearing at the same position within the chain of matrix exponentials, which leads to
\begin{equation}
\begin{split}
&\frac{\ud}{\ud t} \mathscr{S}_p(t) - (-i H) \mathscr{S}_p(t) \\
&= \sum_{j=2}^K \prod_{k=j+1}^K \e^{-i t A_k} \comm*{\e^{-i t A_j}, -i \sum_{\ell=1}^{j-1} A_{\ell}} \prod_{k'=1}^{j-1} \e^{-i t A_{k'}}.
\end{split}
\end{equation}
As in \cite{Childs2021}, we now factor out matrix exponentials on the left and right side; for that purpose, we fix some index $s \in \{ 1, \dots, K \}$ and introduce
\begin{equation}
\mathscr{S}_{\text{left}}(t) = \prod_{k=s+1}^K \e^{-i t A_k}, \quad
\mathscr{S}_{\text{right}}(t) = \prod_{k=1}^s \e^{-i t A_k}.
\end{equation}
This leads to
\begin{equation}
\label{eq:S_diff_left_right}
\frac{\ud}{\ud t} \mathscr{S}_p(t) - (-i H) \mathscr{S}_p(t) = \mathscr{S}_{\text{left}}(t) \, \mathcal{T}(t) \, \mathscr{S}_{\text{right}}(t)
\end{equation}
with
\begin{equation}
\label{eq:T_def}
\mathcal{T}(t) = \sum_{j=2}^K \mathcal{T}_{\text{left},j}(t) \comm*{\e^{-i t A_j}, -i \sum_{\ell=1}^{j-1} A_{\ell}} \mathcal{T}_{\text{right},j}(t),
\end{equation}
where
\begin{subequations}
\begin{align}
\mathcal{T}_{\text{left},j}(t) &= \begin{cases} \coprod_{k=j}^{s+1} \e^{i t A_k} & j > s \\ 1 & j = s \\ \prod_{k=j+1}^s \e^{-i t A_k} & j < s \end{cases} \\
\mathcal{T}_{\text{right},j}(t) &= \begin{cases} \prod_{k=s+1}^{j-1} \e^{-i t A_k} & j - 1 > s \\ 1 & j - 1 = s \\ \coprod_{k=s}^j \e^{i t A_k} & j - 1 < s \end{cases}
\end{align}
\end{subequations}
Further following the derivation in \cite{Childs2021}, we express the commutator in Eq.~\eqref{eq:T_def} via
\begin{equation}
\begin{split}
\comm*{ \e^{t X}, Y } %
&= \e^{t X} \int_0^t \ud\tau \e^{-\tau X} \comm{X, Y} \e^{\tau X} \\
&= \int_0^t \ud\tau \e^{\tau X} \comm{X, Y} \e^{-\tau X} \e^{t X},
\end{split}
\end{equation}
using the first variant in case $j > s$, and the second variant in case $j \le s$.
This leads to
\begin{multline}
\label{eq:T_int_repr}
\mathcal{T}(t) \\
= \sum_{j=2}^K \mathcal{T}_j(t) \left( \int_0^t \ud\tau \e^{\pm i \tau A_j} \comm*{-i A_j, -i B_j} \e^{\mp i \tau A_j} \right) \mathcal{T}_j^{\dagger}(t)
\end{multline}
with the first (upper) sign corresponding to $j > s$ and the lower sign to $j \le s$,
\begin{equation}
B_j = \sum_{\ell=1}^{j-1} A_{\ell}, \quad j = 2, \dots, K,
\end{equation}
and
\begin{equation}
\mathcal{T}_j(t) = \begin{cases} \coprod_{k=j-1}^{s+1} \e^{i t A_k} & j > s + 1 \\ 1 & j = s, s + 1 \\ \prod_{k=j+1}^s \e^{-i t A_k} & j < s \end{cases}
\end{equation}
Together with the Lie-algebraic identity $\e^A B \e^{-A} = \e^{\ad_A} B$ for $\ad_A B = \comm{A, B}$, we can express \eqref{eq:T_int_repr} as
\begin{equation}
\label{eq:T_ad_repr}
\mathcal{T}(t) %
= \sum_{j=2}^K \mathcal{T}_j(t) \left( \int_0^t \ud\tau \e^{\pm i \tau \ad_{A_j}} \ad_{i A_j} (i B_j) \right) \mathcal{T}_j^{\dagger}(t).
\end{equation}
Next, we use \cite[Theorem~5]{Childs2021}, which in turn is based on a repeated application of the Taylor series expansion (for integer $q \in \N_{\ge 1}$):
\begin{multline}
\e^{\tau \ad_A} B%
= B + \tau \ad_A B + \dots + \frac{\tau^{q-1}}{(q-1)!} \ad_A^{q-1} B \\
+ \int_0^{\tau} \ud\tau_2 \, \frac{\tau_2^{q-1}}{(q-1)!} \e^{(\tau - \tau_2) \ad_A} \ad_A^q B.
\end{multline}
First applying this theorem to $\e^{\pm i \tau \ad_{A_j}}$ inside the integral of \eqref{eq:T_ad_repr} and then to the conjugations by $\mathcal{T}_j(t)$ facilitates the series expansion
\begin{multline}
\label{eq:abstract_order_series_expansion}
\mathcal{T}_j(t) \left( \int_0^t \ud\tau \e^{\pm i \tau \ad_{A_j}} \ad_{i A_j} (i B_j) \right) \mathcal{T}_j^{\dagger}(t) \\
= C_{j,0} + C_{j,1} t + \dots + C_{j,p-1} t^{p-1} + \mathcal{C}_j(t),
\end{multline}
with the remainder term of order $\mathcal{C}_j(t) = \mathcal{O}(t^p)$.
Due to the order condition \eqref{eq:p_order_condition}, we conclude that all terms of lower order will eventually cancel out.
Regarding the remainder term, we distinguish between two cases: \\
Case $j \le s$:
\begin{equation}
\begin{split}
&\mathcal{C}_j(t) \\
&= \e^{-i t \ad_{A_s}} \cdots \e^{-i t \ad_{A_{j+1}}} \int_0^t \ud \tau_1 \int_0^{\tau_1} \ud\tau_2 \, \frac{\tau_2^{p-2}}{(p-2)!} \\
&\qquad \times \e^{-i (\tau_1 - \tau_2) \ad_{A_j}} \ad_{-i A_j}^{p-1} \ad_{i A_j} (i B_j) \\
&\: + \sum_{k=j+1}^s \sum_{\substack{q_j + \dots + q_k = p-1 \\ q_k \neq 0}} \e^{-i t A_s} \cdots \e^{-i t A_{k+1}} \\
&\qquad \times \int_0^t \ud\tau \frac{\tau^{q_k-1}}{(q_k-1)! } \frac{t^{q_{k-1} + \dots + q_{j+1} + (q_j + 1)}}{q_{k-1}! \cdots q_{j+1}! (q_j + 1)!} \\
&\qquad \qquad \times \e^{-i (t - \tau) \ad_{A_k}} \ad_{-i A_k}^{q_k} \cdots \ad_{-i A_j}^{q_j} \ad_{i A_j} (i B_j) .
\end{split}
\end{equation}
The expressions with $(q_j + 1)$ (instead of $q_j$) result from the integration in \eqref{eq:abstract_order_series_expansion}, and we use the convention that a summation over an empty range, like $\sum_{k=j+1}^s \cdots$ for $j = s$, evaluates to $0$. \\
Case $j > s$:
\begin{equation}
\begin{split}
&\mathcal{C}_j(t) \\
&= \e^{i t \ad_{A_{s+1}}} \cdots \e^{i t \ad_{A_{j-1}}} \int_0^t \ud \tau_1 \int_0^{\tau_1} \ud\tau_2 \, \frac{\tau_2^{p-2}}{(p-2)!} \\
&\qquad \times \e^{i (\tau_1 - \tau_2) \ad_{A_j}} \ad_{i A_j}^{p-1} \ad_{i A_j} (i B_j) \\
&\: + \sum_{k=s+1}^{j-1} \sum_{\substack{q_k + \dots + q_j = p-1 \\ q_k \neq 0}} \e^{i t A_{s+1}} \cdots \e^{i t A_{k-1}} \\
&\qquad \times \int_0^t \ud\tau \frac{\tau^{q_k-1}}{(q_k-1)! } \frac{t^{q_{k+1} + \dots + q_{j-1} + (q_j + 1)}}{q_{k+1}! \cdots q_{j-1}! (q_j + 1)!} \\
&\qquad \qquad \times \e^{i (t - \tau) \ad_{A_k}} \ad_{i A_k}^{q_k} \cdots \ad_{i A_j}^{q_j} \ad_{i A_j} (i B_j) .
\end{split}
\end{equation}
To assemble everything into a final error estimate, we use the triangle inequality and the fact that the spectral norm of the matrix exponentials evaluates to $1$ (since they are unitary maps).
The inner integration w.r.t.\ $\tau$ can then be performed analytically, and one arrives at
\begin{equation}
\begin{split}
&\norm{\mathcal{T}(t)} %
 \le \sum_{j=2}^K \norm*{\mathcal{C}_j(t)} \\
&\le \sum_{j=2}^s \Bigg( \int_0^t \ud \tau_1 \int_0^{\tau_1} \ud\tau_2 \, \frac{\tau_2^{p-2}}{(p-2)!} \norm*{\ad_{A_j}^p B_j } \\
&\quad + \sum_{k=j+1}^s \sum_{\substack{q_j + \dots + q_k = p-1 \\ q_k \neq 0}} \frac{t^p}{p!} \binom{p}{q_j+1, q_{j+1}, \dots, q_k} \\
&\qquad \qquad \times \norm*{\ad_{A_k}^{q_k} \cdots \ad_{A_j}^{q_j} \ad_{A_j} B_j} \Bigg) \\
&\: + \sum_{j=s+1}^K \Bigg( \int_0^t \ud \tau_1 \int_0^{\tau_1} \ud\tau_2 \, \frac{\tau_2^{p-2}}{(p-2)!} \norm*{\ad_{A_j}^p B_j } \\
&\quad + \sum_{k=s+1}^{j-1} \sum_{\substack{q_k + \dots + q_j = p-1 \\ q_k \neq 0}} \frac{t^p}{p!} \binom{p}{q_k, \dots, q_{j-1}, q_j+1} \\
&\qquad \qquad \times \norm*{\ad_{A_k}^{q_k} \cdots \ad_{A_j}^{q_j} \ad_{A_j} B_j} \Bigg) \\
&= \frac{t^p}{p!} \Bigg( \sum_{j=2}^s \sum_{\substack{q_j + \dots + q_s = p \\ q_j \neq 0}} \binom{p}{q_j, \dots, q_s} \norm*{\ad_{A_s}^{q_s} \cdots \ad_{A_j}^{q_j} B_j} \\
&\: + \sum_{j=s+1}^K \sum_{\substack{q_{s+1} + \dots \\ + q_j = p \\ q_j \neq 0}} \binom{p}{q_{s+1}, \dots, q_j} \norm*{\ad_{A_{s+1}}^{q_{s+1}} \cdots \ad_{A_j}^{q_j} B_j} \Bigg).
\end{split}
\end{equation}
Now applying \cite[Lemma~A.1]{Childs2021} (variation of parameters) to Eq.~\eqref{eq:S_diff_left_right} results in
\begin{equation}
\label{eq:S_diff_expitH}
\mathscr{S}_p(t) = \e^{-i t H} + \int_0^t \ud\tau \e^{-i (t - \tau) H} \mathscr{S}_{\text{left}}(\tau) \, \mathcal{T}(\tau) \, \mathscr{S}_{\text{right}}(\tau).
\end{equation}
Inserting the above bound for $\norm{\mathcal{T}(t)}$ leads to the final result
\begin{equation}
\begin{split}
&\norm*{\mathscr{S}_p(t) - \e^{-i t H}} \le \int_0^t \ud\tau \norm{\mathcal{T}(\tau)} \\
&\le \frac{t^{p+1}}{(p+1)!} \Bigg( \sum_{j=2}^s \sum_{\substack{q_j + \dots + q_s = p \\ q_j \neq 0}} \binom{p}{q_j, \dots, q_s} \norm*{\ad_{A_s}^{q_s} \cdots \ad_{A_j}^{q_j} B_j} \\
&\: + \sum_{j=s+1}^K \sum_{\substack{q_{s+1} + \dots \\ + q_j = p \\ q_j \neq 0}} \binom{p}{q_{s+1}, \dots, q_j} \norm*{\ad_{A_{s+1}}^{q_{s+1}} \cdots \ad_{A_j}^{q_j} B_j} \Bigg).
\end{split}
\end{equation}

\section{Prefactor comparison for the second-order Suzuki formula}
\label{sec:prefactor_comparison}

To demonstrate the improvement offered by Proposition~\ref{prop:tight_bound_strang} compared to Theorem~\ref{thm:higher_order_bounds_prefactors}, we consider a Hamiltonian with three terms, $H = H_1 + H_2 + H_3$.
Evaluating Eq.~\eqref{eq:tight_bound_strang}, expanding the commutators and using the triangle inequality leads to
\begin{equation}
\label{eq:tight_bound_strang_three_terms}
\begin{split}
&\norm*{\mathscr{S}_2(t) - \e^{-i t H}} \\
&\le t^3 \bigg( \frac{1}{24} \norm*{\comm*{H_1, \comm*{H_2, H_1}}} + \frac{1}{12} \norm*{\comm*{H_2, \comm*{H_2, H_1}}} \\
&\hspace{6mm} + \frac{1}{12} \norm*{\comm*{H_3, \comm*{H_2, H_1}}} + \frac{1}{24} \norm*{\comm*{H_1, \comm*{H_3, H_1}}} \\
&\hspace{6mm} + \frac{1}{12} \norm*{\comm*{H_2, \comm*{H_3, H_1}}} + \frac{1}{12} \norm*{\comm*{H_3, \comm*{H_3, H_1}}} \\
&\hspace{6mm} + \frac{1}{24} \norm*{\comm*{H_2, \comm*{H_3, H_2}}} + \frac{1}{12} \norm*{\comm*{H_3, \comm*{H_3, H_2}}} \bigg).
\end{split}
\end{equation}
For comparison, we now apply Theorem~\ref{thm:higher_order_bounds_prefactors} in the same setting.
Using the notation of Eq.~\eqref{eq:product_formula_sequential}, we identify $K = 5$, $A_1 = \frac{1}{2} H_1$, $A_2 = \frac{1}{2} H_2$, $A_3 = H_3$, $A_4 = \frac{1}{2} H_2$, $A_5 = \frac{1}{2} H_1$.
Inserting into Eq.~\eqref{eq:product_formula_error_bound} for $s = 3$ leads to
\begin{equation}
\begin{split}
&\norm*{\mathscr{S}_2(t) - \e^{-i t H}} \\
&\le \frac{t^3}{6} \Bigg( \sum_{\substack{q_2 + q_3 = 2 \\ q_2 \neq 0}} \binom{2}{q_2, q_3} \norm*{\ad_{A_3}^{q_3} \ad_{A_2}^{q_2} \big(\tfrac{1}{2} H_1\big)} \\
&\hspace{6mm} + \sum_{q_3=2} \binom{2}{q_3} \norm*{\ad_{A_3}^{q_3} \big(\tfrac{1}{2} H_1 + \tfrac{1}{2} H_2\big)} \\
&\hspace{6mm} + \sum_{q_4=2} \binom{2}{q_4} \norm*{\ad_{A_4}^{q_4} \big(\tfrac{1}{2} H_1 + \tfrac{1}{2} H_2 + H_3\big)} \\
&\hspace{6mm} + \sum_{\substack{q_4 + q_5 = 2 \\ q_5 \neq 0}} \binom{2}{q_4, q_5} \norm*{\ad_{A_4}^{q_4} \ad_{A_5}^{q_5} \big(\tfrac{1}{2} H_1 + H_2 + H_3\big)}\Bigg) \\
&\le t^3 \bigg( \frac{1}{24} \norm*{\comm*{H_1, \comm*{H_2, H_1}}} + \frac{1}{8} \norm*{\comm*{H_2, \comm*{H_2, H_1}}} \\
&\hspace{6mm} + \frac{1}{12} \norm*{\comm*{H_3, \comm*{H_2, H_1}}} + \frac{1}{24} \norm*{\comm*{H_1, \comm*{H_3, H_1}}} \\
&\hspace{6mm} + \frac{1}{12} \norm*{\comm*{H_2, \comm*{H_3, H_1}}} + \frac{1}{12} \norm*{\comm*{H_3, \comm*{H_3, H_1}}} \\
&\hspace{6mm} + \frac{1}{24} \norm*{\comm*{H_2, \comm*{H_3, H_2}}} + \frac{1}{12} \norm*{\comm*{H_3, \comm*{H_3, H_2}}} \bigg).
\end{split}
\end{equation}
This expression differs from Eq.~\eqref{eq:tight_bound_strang_three_terms} by the prefactor $\frac{1}{8}$ compared to the prefactor $\frac{1}{12}$ in front of $\norm{\comm{H_2, \comm{H_2, H_1}}}$.

\newpage

\section{Commutation relations for the Hamiltonian terms of the Fermi-Hubbard model}
\label{sec:commutation_term_level}

We verify the commutation relations stated in Sect.~\ref{sec:commutators_elementary}.

The relations in Eq.~\eqref{eq:comm_disjoint_support} are clear when noting that the hopping and number operators consist of an even number of creation and annihilation operators, and hence they commute in case they have non-overlapping support.

The statement that number operators always commute, Eq.~\eqref{eq:comm_number_op}, follows from the fact that number operators are diagonal matrices with respect to the standard basis.

For notational conciseness, we will omit the spin index in the following without loss of generality, assuming that all operators share the same spin.

\begin{lemma}
\label{lemma:basic_fermionic_commutator}
If $i, j, k \in \Lambda$ are pairwise different or $i \neq j = k$ or $i = j \neq k$, then
\begin{equation}
\comm{a_i^\dagger a_j^{}, a_j^\dagger a_k^{}} = a_i^\dagger a_k^{}.
\end{equation}
\end{lemma}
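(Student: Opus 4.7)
The plan is to prove the identity by direct expansion of the commutator using only the canonical fermionic anti-commutation relations $\{a_m, a_n\} = 0$, $\{a_m^\dagger, a_n^\dagger\} = 0$, and $\{a_m, a_n^\dagger\} = \delta_{mn}$. Writing
$$\comm{a_i^\dagger a_j, a_j^\dagger a_k} = a_i^\dagger a_j a_j^\dagger a_k - a_j^\dagger a_k a_i^\dagger a_j,$$
I would first simplify the left product using $a_j a_j^\dagger = 1 - a_j^\dagger a_j$, which yields $a_i^\dagger a_k - a_i^\dagger a_j^\dagger a_j a_k$. The remaining task is then to show that in each admissible case the right product equals $-a_i^\dagger a_j^\dagger a_j a_k$, so that all contributions except $a_i^\dagger a_k$ cancel.

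For the pairwise-distinct case, I would anti-commute $a_i^\dagger$ leftward past $a_k$, then $a_j$ rightward past $a_k$, and finally rearrange $a_j^\dagger a_i^\dagger = -a_i^\dagger a_j^\dagger$. Since $i \neq k$, $k \neq j$, and $i \neq j$, each of these three slides picks up a factor $-1$, combining to the desired overall minus sign. For the case $i \neq j = k$, one anti-commutation of $a_i^\dagger$ past $a_j$ in the right product leaves a string containing $a_j a_j = 0$, while in the residual term from step one, $a_i^\dagger a_j^\dagger a_j a_k = a_i^\dagger a_j^\dagger a_j a_j = 0$ as well, leaving exactly $a_i^\dagger a_j = a_i^\dagger a_k$. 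The case $i = j \neq k$ is entirely analogous, with $a_i^\dagger a_i^\dagger = 0$ playing the role of $a_j a_j = 0$.

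The main obstacle is sign bookkeeping, and in particular recognising where the hypotheses are used: the condition $i \neq k$ (which holds in all three listed cases) is essential, because it is exactly what allows one to anti-commute $a_k$ past $a_i^\dagger$ without picking up a contraction $\delta_{ik}$; without this the identity in fact fails, as one can check in the excluded case $i = k \neq j$ where the commutator becomes $n_i - n_j$ rather than $n_i$. Once the case analysis is organised in this way, no nontrivial computation remains.
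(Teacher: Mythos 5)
Your proof is correct and follows essentially the same route as the paper's: direct expansion of the commutator using the canonical anti-commutation relations, with the same three-way case split and the same observation that $i \neq k$ is what prevents a $\delta_{ik}$ contraction. The only difference is bookkeeping --- the paper normal-orders both products into the form $a_i^\dagger a_k^{}\,(a_j^{} a_j^\dagger + a_j^\dagger a_j^{})$, whereas you extract the contraction from the first product and show the quartic remainders cancel --- which does not change the substance of the argument.
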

\begin{proof}
First consider the case where $i,j,k$ are pairwise different:
\begin{equation}
\begin{split}
\comm{a_i^\dagger a_j^{}, a_j^\dagger a_k^{}} %
&= a_i^\dagger a_j^{} a_j^\dagger a_k^{} - a_j^\dagger a_k^{} a_i^\dagger a_j^{} \\
&= a_i^\dagger a_k^{} a_j^{} a_j^\dagger + a_i^\dagger a_k^{} a_j^\dagger a_j^{} \\
&= a_i^\dagger a_k^{} \left(a_j^{} a_j^\dagger + a_j^\dagger a_j^{}\right) \\
&= a_i^\dagger a_k^{}.
\end{split}
\end{equation}

Next, consider $i \neq j = k$:
\begin{equation}
\begin{split}
\comm{a_i^\dagger a_j^{}, a_j^\dagger a_j^{}} %
&= a_i^\dagger a_j^{} a_j^\dagger a_j^{} - a_j^\dagger a_j^{} a_i^\dagger a_j^{} \\
&= a_i^\dagger a_j^{} a_j^\dagger a_j^{} \\
&= a_i^\dagger a_j^{}.
\end{split}
\end{equation}
Similar calculations complete the proof for $i = j \neq k$.
\end{proof}

Now we consider $\comm{h_{ij}^{}, h_{jk}^{}}$ for $i, j, k$ pairwise different:
\begin{equation}
\begin{split}
\comm{h_{ij}^{}, h_{jk}^{}} %
&= \comm{a_i^\dagger a_j^{} + a_j^\dagger a_i^{}, a_j^\dagger a_k^{} + a_k^\dagger a_j^{}} \\
&= \comm{a_i^\dagger a_j^{}, a_j^\dagger a_k^{}} + \comm{a_i^\dagger a_j^{}, a_k^\dagger a_j^{}} \\
&\: + \comm{a_j^\dagger a_i^{}, a_j^\dagger a_k^{}} + \comm{a_j^\dagger a_i^{}, a_k^\dagger a_j^{}} \\
&= \comm{a_i^\dagger a_j^{}, a_j^\dagger a_k^{}} - \comm{a_k^\dagger a_j^{}, a_j^\dagger a_i^{}} \\
&= a_i^\dagger a_k^{} - a_k^\dagger a_i^{} \\
&= \tilde{h}_{ik}^{}
\end{split}
\end{equation}
where Lemma \ref{lemma:basic_fermionic_commutator} has been applied in the penultimate step.
Notably, the commutator of two adjacent hopping terms has no support on the overlapping mode (here $j$) and is, up to a sign change, equal to a hopping term itself.
In case $i = k$, one observes that $\comm{h_{ij}^{}, h_{ji}^{}} = \comm{h_{ij}^{}, h_{ij}^{}} = 0$, which agrees with $\tilde{h}_{ii}$.
Taken together, we have verified Eq.~\eqref{eq:comm_h_h}.

The commutators involving signed hopping terms follow a similar pattern: for $i, j, k$ pairwise different,
\begin{equation}
\begin{split}
\comm{\tilde{h}_{ij}^{}, \tilde{h}_{jk}^{}} %
&= \comm{a_i^\dagger a_j^{} - a_j^\dagger a_i^{}, a_j^\dagger a_k^{} - a_k^\dagger a_j^{}} \\
&= \comm{a_i^\dagger a_j^{}, a_j^\dagger a_k^{}} - \comm{a_k^\dagger a_j^{}, a_j^\dagger a_i^{}} \\
&= a_i^\dagger a_k^{} - a_k^\dagger a_i^{} \\
&= \tilde{h}_{ik}^{}
\end{split}
\end{equation}
and
\begin{equation}
\begin{split}
\comm{h_{ij}^{}, \tilde{h}_{jk}^{}} %
&= \comm{a_i^\dagger a_j^{} + a_j^\dagger a_i^{}, a_j^\dagger a_k^{} - a_k^\dagger a_j^{}} \\
&= \comm{a_i^\dagger a_j^{}, a_j^\dagger a_k^{}} + \comm{a_k^\dagger a_j^{}, a_j^\dagger a_i^{}} \\
&= a_i^\dagger a_k^{} + a_k^\dagger a_i^{} \\
&= h_{ik}^{},
\end{split}
\end{equation}
again using Lemma~\ref{lemma:basic_fermionic_commutator}.
In case $i = k \neq j$, we obtain $\comm{\tilde{h}_{ij}^{}, \tilde{h}_{ji}^{}} = - \comm{\tilde{h}_{ij}^{}, \tilde{h}_{ij}^{}} = 0 = \tilde{h}_{ii}^{}$, which completes the derivation of Eq.~\eqref{eq:comm_g_g}.
Moreover,
\begin{equation}
\begin{split}
\comm{h_{ij}^{}, \tilde{h}_{ji}^{}} %
&= \comm{a_i^\dagger a_j^{} + a_j^\dagger a_i^{}, a_j^\dagger a_i^{} - a_i^\dagger a_j^{}} \\
&= \comm{a_i^\dagger a_j^{}, a_j^\dagger a_i^{}} - \comm{a_j^\dagger a_i^{}, a_i^\dagger a_j^{}} \\
&= 2 \comm{a_i^\dagger a_j^{}, a_j^\dagger a_i^{}} \\
&= 2 \left(a_i^\dagger a_j^{} a_j^\dagger a_i^{} - a_j^\dagger a_i^{} a_i^\dagger a_j^{}\right) \\
&= 2 \left(n_i^{} (1 - n_j^{}) - n_j^{} (1 - n_i^{})\right) \\
&= 2 \left(n_i^{} - n_j^{}\right),
\end{split}
\end{equation}
finalizing the proof of Eq.~\eqref{eq:comm_h_g}.

As last step, we verify Eq.~\eqref{eq:comm_h_n} by
\begin{equation}
\begin{split}
\comm{h_{ij}^{}, n_j^{}} %
&= \comm{a_i^\dagger a_j^{}, a_j^\dagger a_j^{}} - \comm{a_j^\dagger a_j^{}, a_j^\dagger a_i^{}} \\
&= a_i^\dagger a_j^{} - a_j^\dagger a_i^{} \\
&= \tilde{h}_{ij}^{}
\end{split}
\end{equation}
where we have used Lemma~\ref{lemma:basic_fermionic_commutator} to simplify the commutators.
Eq.~\eqref{eq:comm_g_n} follows analogously.

\end{document}